\newcolumntype{R}[1]{>{\raggedleft\arraybackslash }b{#1}}
\newcolumntype{L}[1]{>{\raggedright\arraybackslash }b{#1}}
\newcolumntype{C}[1]{>{\centering\arraybackslash }b{#1}}
\newtheorem{theorem}{Theorem}[section]
\theoremstyle{plain}
\newtheorem{corollary}{Corollary}
\newtheorem{proposition}{Proposition}
\numberwithin{equation}{section}
\begin{document}
\Large
\title[Confidence bands]{Strong approximation for the deviation of kernel copula estimators}
\author{$^{(1)}$Diam Ba}
\author{$^{(1,2)}$Cheikh Tidiane Seck}
\author{$^{(1,3,4)}$Gane Samb Lo}

%\date{\today}
\subjclass[2000]{Primary 62G05, 62G07; Secondary 60F12, 62G20}
\keywords{Copula function; Nonparametric estimation; Kernel estimation; Uniform in bandwidth LIL.}
%\dedicatory{Dedicated to the memory of S. Bach.}
%\thanks{This paper is in final form and no version of it will be submitted for publication elsewhere.}

\begin{abstract}
%Let $\hat{C}_n$ denote a kernel estimator of the copula function. We prove under smoothness conditions on the copula function a law of the iterated logarithm for the deviation $\hat{C}_n -\mathbb{E}\hat{C}_n$. In this paper we establish asymptotic simultaneous confidence bands based on the mirror-reflection kernel estimator  of copulas proposed by Gijbels and Mielniczuk \cite{r9}. 
We prove  a uniform in bandwidth law of the iterated logarithm for the maximal deviation of kernel copula estimators from their expectations. We deal especially with the \textit{local linear}, the \textit{mirror-reflection} and the \textit{transformation} estimators. These results are useful for establishing the strong uniform in bandwidth consistency of these kernel estimators.\\
% We also show that the bias of the estimator tends asymptotically and uniformly to zero with a precise rate. 

\bigskip \noindent $^{(1)}$ LERSTAD, Gaston Berger University, Saint-Louis,
Senegal.\newline
\noindent $^{(2)}$ Alioune Diop University, Bambey, Senegal.\newline
\noindent $^{(3)}$ AUST - African University of Sciences and Technology, Abuja, Nigeria\newline
$^{(4)}$ LSTA, Pierre and Marie Curie University, Paris VI, France.\newline

\noindent \textit{Corresponding author}. Gane Samb Lo. Email : gane-samb.lo@edu.ugb.sn. Mailing address : 1178 Evanston Dr NW T3P 0J9, Calgary, Alberta, Canada.

\end{abstract}

\maketitle
\section{Introduction}
Let $\left(X,Y\right)$ be a random couple with joint cumulative distribution function $H$ and marginal distribution functions $F$ and $G$. The Sklar's theorem (see \cite{skl}) says that there exists a bivariate distribution function $C$ on $[0,1]^2$  with uniform margins such that
$$H(x,y)=C(F(x),G(y)).$$
The function $C$ is called the copula associated with $(X,Y)$ and couples the joint distribution $H$ with its marginals. If the marginal distribution functions $F$ and $G$ are continuous, then the copula $C$ is unique and we have  for all $(u,v)\in [0,1]^2$,
$$C(u,v)=H(F^{-1}(u),G^{-1}(v)),$$ 
where $F^{-1}(u)=\inf\{x: F(x)\geq u\}$ and $G^{-1}(v)=\inf\{y: G(y)\geq v\}$ are the generalized inverse functions of $F$ and $G$, respectively. \\

There are three main approaches for copula estimation : parametric, semiparametric and nonparametric. The parametric approach assumes parametric models for both the copula and the marginals and then deals with maximum likelihood or moment method estimation Oakes \cite{oak}(1982). Semiparametric estimation specifies a parametric copula while leaving the marginals nonparametric (see, e.g. Genest \textit{et al.}\cite{ggr}(1995)). The nonparametric approch offers the greatest generality and was initiated by Deheuvels \cite{deh} (1979), who proposed an estimator based on a multivariate empirical distribution function and its marginals. Afterward, some kernel smoothed estimators have been proposed in the literature (see for instance \cite{r4},\cite{r8},\cite{r3},\cite{r2},\cite{r7}).\\

In this paper we are interested with the kernel estimators proposed in Chen and Huang \cite{r2}(2007), Gijbels and Mielniczuk \cite{r4}(1990) and Fermanian \textit{et al.} \cite{r3}(2004), res pectively called the local linear, the mirror-reflection and the transformation estimators. We will establish for each of them a uniform in bandwidth law of the iterated logarithm for its deviation. These results allows to study the uniform consistency of kernel copula estimators over compact sets $[a,b]^2$, with $0<a<b<1.$\\

Let $(X_1,Y_1),...,(X_n,Y_n)$ be an independent and identically distributed sample of the bivariate random vector $(X,Y)$ with joint cumulative distribution function $H$ and marginal distribution functions $F$ and $G$. Denote by $F_n$ and $G_n$  the marginal empirical cumulative distribution functions. Let $\phi:[0,1]\mapsto[0,1]$ be an increasing transformation and  $K(\cdot,\cdot)$ a bivariate kernel function. For all $0\leq u,v\leq 1$ and $0<h<1$, define the general estimator
\begin{equation}\label{ge}
\hat{C}_{n,h}^{(\cdot)}(u,v)=\frac{1}{n}\sum_{i=1}^{n}K\left(\frac{\phi^{-1}(u)-\phi^{-1}(\hat{U}_i)}{h},\frac{\phi^{-1}(v)-\phi^{-1}(\hat{V}_i)}{h} \right)
\end{equation}
If $K(x,y)$ is a multiplicative kernel, i.e. $K(x,y)=K(x)K(y)$ and the pseudo-observations are $\hat{U}_i=\frac{n}{n+1}F_n(X_i)$ and $\hat{V}_i=\frac{n}{n+1}G_n(Y_i)$, \eqref{ge} is exactly the transformation estimator which is defined by
\begin{equation}\label{tra}
\hat{C}_{n,h}^{(T)}(u,v)=\frac{1}{n}\sum_{i=1}^{n}K\left(\frac{\phi^{-1}(u)-\phi^{-1}(\hat{U}_i)}{h} \right)K\left(\frac{\phi^{-1}(v)-\phi^{-1}(\hat{V}_i)}{h} \right).
\end{equation}
For the local linear estimator,  suppose first that the marginals $F$ and $G$ are estimated by
\begin{center} 
$\displaystyle \hat{F}_n(x)=\frac{1}{n}\sum_{i=1}^n K\left(\frac{x-X_i}{b_{n1}}\right),\ \ \ \ \ \hat{G}_n(y)=\frac{1}{n}\sum_{i=1}^n K\left(\frac{y-Y_i}{b_{n2}}\right),$
\end{center}  where $b_{n1}$ and $b_{n2}$ are  bandwidths converging to 0, as $n\rightarrow\infty$ and $K$ is the integral of a symmetric bounded kernel function $k$ supported on $[-1,1]$. 
Next, consider the pseudo-observations $\hat{U}_i = \hat{F}_n(X_i)$ and $\hat{V}_i =\hat{G}_n(Y_i)$ and choose a multiplicative $K(x,y)=K_{u,h}(x)K_{v,h}(y)$, for $u,v\in[0,1]$, where $K_{w,h}(x)=\int_{-\infty}^{x}k_{w,h}(t)dt,\; w=u,v$
% are used to estimate the joint distribution function of the unobserved $F(X_i)$ and $G(Y_i)$, which gives the estimate of the unknown copula $C$. To prevent boundary bias, Chen and Huang suggested using 
 with $k_{w,h}$ a local linear version of the kernel $k$ given by
$$
 k_{w,h}(t)=\frac{k(t)\{a_2(w,h)-a_1(w,h)t\}}{a_0(w,h)a_2(w,h)-a_1^2(w,h)}\mathbb{I}\left\{\frac{w-1}{h}<t<\frac{w}{h}\right\},$$
where $ a_j(w,h)=\int_{(w-1)/h}^{w/h} t^j k(t)dt$ for $j=0,1,2$ ; $w\in[0,1]$ and $0<h<1$ is a bandwidth. Finally taking $\phi(s)=t$, the identity function in \eqref{ge}, we obtain the local linear kernel estimator  defined as %given by (\ref{e1}),
\begin{equation}\label{lli}
\hat{C}_{n,h}^{(LL)}(u,v)=\frac{1}{n}\sum_{i=1}^{n}K_{u,h}\left(\frac{u-\hat{U_i}}{h} \right)K_{v,h}\left(\frac{v-\hat{V_i}}{h} \right).
\end{equation}
  According to Omelka \textit{et al} (2009), the mirror-reflection estimator can be represented as
 \begin{equation}\label{mrf}
\hat{C}_{n}^{(MR)}(u,v)=\sum_{l=1}^9\left[Z_n(l,u,v)-Z_n(l,u,0)-Z_n(l,0,v)+Z_n(l,0,0)\right],
\end{equation}
where
\begin{equation}\label{bigz}
Z_n(l,u,v)=\frac{1}{n}\sum_{i=1}^nK\left(\frac{u-\hat{U}_i^{(l)}}{h_n}\right)K\left(\frac{v-\hat{V}_i^{(l)}}{h_n}\right)
\end{equation}
and 
\begin{multline}
\left\{\left(\hat{U}_i^{(l)},\hat{V}_i^{(l)}\right),i=1,...,n,l=1,...,9\right\}\\
=\left\{\left(\pm\hat{U}_i,\pm\hat{V}_i\right),\left(\pm\hat{U}_i,2-\hat{V}_i\right),\left(2-\hat{U}_i,\pm\hat{V}_i\right),\left(2-\hat{U}_i,2-\hat{V}_i\right),i=1,...,n\right\}\nonumber
\end{multline}
Setting $\phi(t)=t$ and using a multiplicative kernel $K(x,y)=K(x)K(y)$, one can see that each quantity $Z_n(l,u,v)$ may be put in the form \eqref{ge}.\\

%The mirror-reflection introduced in \cite{r2} to estimate the copula, is given by
%\begin{multline}\label{eeq1}
%\hat{C}_{n,h}^{(MR)}(u,v)=\frac{1}{n}\sum_{i=1}^n \sum_{l=1}^9 \bigg[K\left(\frac{u-\hat{U}_i^{(l)}}{h}\right)-K\left(\frac{-\hat{U}_i^{(l)}}{h}\right)\bigg]\\ \times\bigg[K\left(\frac{v-\hat{V}_i^{(l)}}{h}\right)-K\left(\frac{-\hat{V}_i^{(l)}}{h}\right)\bigg],
%\end{multline}where
The remainder of the paper is organized as follows. In Section 2, we state our main results which consist of laws of the iterated logarithm for the deviations of the estimators \ref{tra},\ref{lli} and \ref{mrf} from their means.  Section 3 is devoted to the proofs of the results. Finally, in Section 4 we give an appendix, where we show a key result for the proofs inspired by a general theorem of Mason and Swanpoel (2010), concerning the uniform in bandwidth consistency of kernel-type function estimators. 

\section{Main results }\label{ssec2}
We state our theoretical results in Theorems \ref{t1}, \ref{t2} and \ref{t3}, which give  uniform in bandwidth laws of the iterated logarithm (LIL) for the maximal deviation of the estimators \eqref{tra},\eqref{lli} and \eqref{mrf} from their expectations.\\
 Let $R_n =\left(\frac{n}{2\log\log n}\right)^{1/2}$. We have
 \begin{theorem}\label{t1}
%Suppose that the copula function $C(u,v)$ has bounded first order partial derivatives on $(0, 1)^2$. Then 
For any sequence of positive constants $(b_n)_{n\geq 1}$ satisfying $0<b_n<1, b_n\rightarrow 0$, $b_n\geq (\log n)^{-1}$, and for some $c>0$, we have almost surely
\begin{equation}\label{e2}
\limsup_{n\rightarrow\infty}\left\{R_n\sup_{\frac{c\log n}{n}\le h\le b_n}\sup_{(u,v)\in(0,1)^2}\left|\hat{C}_{n,h}^{(LL)}(u,v)-\mathbb{E}\hat{C}_{n,h}^{(LL)}(u,v)\right|\right\}\leq 3.
\end{equation}
\end{theorem}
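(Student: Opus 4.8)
The plan is to deduce \eqref{e2} from the uniform‑in‑bandwidth LIL for kernel‑type empirical processes proved in the Appendix (Section 4), which is the quantitative engine behind all three theorems. Writing $U_i=F(X_i)$, $V_i=G(Y_i)$ — these are i.i.d.\ uniform on $[0,1]$ since $F,G$ are continuous — and abbreviating $W_i=K_{u,h}\!\big(\tfrac{u-U_i}{h}\big)$, $W_i'=K_{v,h}\!\big(\tfrac{v-V_i}{h}\big)$, together with $\hat W_i,\hat W_i'$ for the same quantities built from the pseudo‑observations $\hat U_i=\hat F_n(X_i)$, $\hat V_i=\hat G_n(Y_i)$, I would split
\[
\hat C_{n,h}^{(LL)}(u,v)-\mathbb E\hat C_{n,h}^{(LL)}(u,v)
=\Big(\tfrac1n\textstyle\sum_{i}W_iW_i'-\mathbb E[W_1W_1']\Big)
+\Big(\tfrac1n\textstyle\sum_{i}\big(\hat W_i\hat W_i'-W_iW_i'\big)\Big)
+\Big(\mathbb E[W_1W_1']-\mathbb E[\hat W_1\hat W_1']\Big),
\]
and call the three terms on the right $\mathrm{(II)}$, $\mathrm{(I)}$, $\mathrm{(III)}$. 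The theorem then amounts to showing that $\mathrm{(II)}$ carries the full LIL rate with constant $3$, while $\mathrm{(I)}$ and $\mathrm{(III)}$ are $o(R_n^{-1})$ almost surely, uniformly over $c\log n/n\le h\le b_n$ and $(u,v)\in(0,1)^2$.

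For the main term $\mathrm{(II)}$ I would apply the Appendix theorem to the class
\[
\mathcal G_n=\Big\{(x,y)\mapsto K_{u,h}\!\big(\tfrac{u-F(x)}{h}\big)K_{v,h}\!\big(\tfrac{v-G(y)}{h}\big)\ :\ (u,v)\in(0,1)^2,\ \tfrac{c\log n}{n}\le h\le b_n\Big\}.
\]
Two facts must be verified. First, $\mathcal G_n$ is a pointwise‑measurable VC‑type class with bounded envelope: for $h\le u\le 1-h$ the indicator in the definition of $k_{u,h}$ is identically $1$ on the support of $k$, so $a_1(u,h)=0$ and $k_{u,h}\equiv k$, $K_{u,h}\equiv K$; near the endpoints the normalizing factor $a_0(u,h)a_2(u,h)-a_1^2(u,h)$ is a Cauchy--Schwarz gap over an interval of length $\ge 1$ and hence stays bounded away from $0$, so $\sup_{u,h}\|K_{u,h}\|_\infty<\infty$, and each $K_{u,h}$ has total variation bounded uniformly in $(u,h)$, which gives the VC property through the usual stability lemmas for bounded‑variation transforms. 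Second, because the envelope (hence the variance, which is $\Theta(1)$ and does not contract with $h$) is bounded and bounded below, the $h|\log h|$ fluctuation term in the normalization of the Appendix theorem is dominated by $\log\log n$ throughout the range $c\log n/n\le h\le b_n$, so the correct normalizing sequence is exactly $R_n$. Feeding the explicit envelope of $\mathcal G_n$ obtained from the boundary analysis of $K_{u,h}$ and the crude bound $\operatorname{Var}(W_1W_1')\le\mathbb E[(W_1W_1')^2]$ into the Appendix theorem then yields the constant $3$ on the right of \eqref{e2} for $\mathrm{(II)}$.

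It remains to dispose of $\mathrm{(I)}$ and $\mathrm{(III)}$, and this is where the real work lies. The natural route is to linearize, $\hat W_i-W_i\approx-\tfrac1h k_{u,h}\!\big(\tfrac{u-U_i}{h}\big)(\hat U_i-U_i)$ with a remainder of order $h^{-2}(\hat U_i-U_i)^2$, and similarly for $\hat W_i'-W_i'$, and then control the resulting sums using (a) the almost sure uniform rate $\sup_x|\hat F_n(x)-F(x)|=O\big(b_{n1}+\sqrt{\log\log n/n}\big)$ and its analogue for $\hat G_n$ — a consequence of the LIL for the empirical process together with the bias of a kernel‑smoothed distribution function — and (b) a uniform modulus of continuity for the maps $s\mapsto K_{u,h}\big(\tfrac{u-s}{h}\big)$. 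The delicate point, and the one I expect to be the main obstacle, is the small‑bandwidth regime $h\downarrow c\log n/n$: there the Lipschitz constant $h^{-1}$ of $K_{u,h}$ is as large as $n/(c\log n)$, so a purely pathwise bound on $\mathrm{(I)}$ is too lossy, and one must exploit cancellation — rewriting the linear part of $\mathrm{(I)}$ as a degenerate $U$‑process in the $X_j$'s, plus a bias term governed by $b_{n1}$, and bounding it uniformly in $h$ by a further application of the Appendix machinery. Assembling $\mathrm{(I)}+\mathrm{(II)}+\mathrm{(III)}$ and passing to the $\limsup$ gives \eqref{e2}; an entirely parallel scheme, with $\mathcal G_n$ replaced by the classes attached to \eqref{tra} and \eqref{mrf} (where the pseudo‑observations are the rank transforms and the argument is correspondingly simpler), will serve for Theorems~\ref{t2} and~\ref{t3}.
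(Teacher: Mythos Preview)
Your decomposition is genuinely different from the paper's, and the gap is in the treatment of your main term $\mathrm{(II)}$. You assert that feeding the envelope and the variance bound of $\mathcal G_n$ into the Appendix theorem ``yields the constant $3$'', but the Mason--Swanepoel result (Proposition~\ref{p1} in the paper) produces only an \emph{unspecified} constant $A(c)$; it is an order result, not a sharp LIL. The constant $3$ in the paper does not come from the kernel machinery at all: it comes from the empirical copula process $\tilde{\mathbb C}_n$ via the Kiefer approximation $\mathbb K_C^\ast(u,v,n)=\mathbb K_C(u,v,n)-C_u\,\mathbb K_C(u,1,n)-C_v\,\mathbb K_C(1,v,n)$ and Wichura's LIL, which bounds each of the three summands by $1$. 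Your term $\mathrm{(II)}$ is indeed close to $n^{-1/2}\tilde{\mathbb C}_n$, but to extract the constant $3$ you would have to \emph{prove} that closeness, i.e.\ show $R_n\sup|\mathrm{(II)}-n^{-1/2}\tilde{\mathbb C}_n|\to 0$, and then invoke Kiefer/Wichura---which is precisely the paper's argument (applied to the oracle estimator). Applying Mason--Swanepoel directly to $\mathcal G_n$ cannot work here anyway, because that class has variance $\Theta(1)$ rather than $O(h)$, so condition (G.ii) fails and no vanishing rate emerges.

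The paper sidesteps both this issue and your ``main obstacle'' $\mathrm{(I)}$ with a different split: instead of subtracting the oracle kernel estimator, it subtracts the empirical copula $\widetilde C_n$ itself, writing $\sqrt n\,\hat D_{n,h}=\bigl(\sqrt n\,\hat D_{n,h}-\tilde{\mathbb C}_n\bigr)+\tilde{\mathbb C}_n$. The pseudo-observations are absorbed into the class $\mathcal G$ by indexing over \emph{all} nondecreasing $\zeta_1,\zeta_2$, so no separate linearization or $U$-process argument is needed; the key is that $g=K(\cdot)-\mathbb I\{\cdot\}$ has variance $O(h)$ (this is the content of the (G.ii) verification), which makes the first bracket $O(\sqrt{b_n\log\log n})=o(\sqrt{\log\log n})$ by Corollary~\ref{crl1}. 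The second term then carries the constant $3$ via Kiefer/Wichura. Your route could in principle be completed, but only by reinserting these two missing pieces---the comparison with $\tilde{\mathbb C}_n$ and the Kiefer LIL---at which point the oracle/pseudo split $\mathrm{(I)}+\mathrm{(III)}$ becomes redundant work.
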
 

\begin{theorem}\label{t2}
%Suppose that the copula function $C(u,v)$ has bounded first-order partial derivatives on $(0, 1)^2$ and 
Suppose that the increasing transformation $\phi$ admits a bounded derivative $\phi'$. Then, for any sequence of positive constants $(b_n)_{n\geq 1}$ satisfying $0<b_n<1, b_n\rightarrow 0$, $b_n\geq (\log n)^{-1}$, and for some $c>0$, we have almost surely 
\begin{equation}\label{ee2}
\limsup_{n\rightarrow\infty}\left\{R_n\sup_{\frac{c\log n}{n}\le h\le b_n}\sup_{(u,v)\in(0,1)^2}\left|\hat{C}_{n,h}^{(T)}(u,v)-\mathbb{E}\hat{C}_{n,h}^{(T)}(u,v)\right|\right\}\leq 3.
\end{equation}
\end{theorem}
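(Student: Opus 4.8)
The plan is to reduce the statement about the transformation estimator $\hat C_{n,h}^{(T)}$ to the ``key result'' mentioned in the excerpt (the appendix result inspired by Mason and Swanepoel (2010), which in turn underlies Theorem \ref{t1}), so that the bound $3$ on the $\limsup$ comes out of exactly the same mechanism. First I would observe that, by the definition \eqref{tra} together with $\phi(t)=\phi^{-1}$ acting coordinatewise, $\hat C_{n,h}^{(T)}(u,v)$ is of the general form \eqref{ge} with a multiplicative kernel and pseudo-observations $\hat U_i=\tfrac{n}{n+1}F_n(X_i)$, $\hat V_i=\tfrac{n}{n+1}G_n(Y_i)$. The first step is therefore to compare $\hat C_{n,h}^{(T)}$ with the ``oracle'' version in which the pseudo-observations are replaced by the true uniforms $U_i=F(X_i)$, $V_i=G(Y_i)$; since $\phi$ is increasing with bounded derivative $\phi'$, a first-order Taylor expansion of $\phi^{-1}$ gives
\[
\left|\frac{\phi^{-1}(u)-\phi^{-1}(\hat U_i)}{h}-\frac{\phi^{-1}(u)-\phi^{-1}(U_i)}{h}\right|\le \frac{\|(\phi^{-1})'\|_\infty}{h}\,|\hat U_i-U_i|,
\]
and the Dvoretzky--Kiefer--Wolfowitz / Glivenko--Cantelli-type bound $\sup_i|\hat U_i-U_i|=O\big(\sqrt{\log\log n/n}\big)$ almost surely controls the perturbation uniformly; multiplied by $R_n$ and divided by the smallest allowed $h=c\log n/n$ this contribution is $o(1)$ provided $c$ is chosen large enough, which is exactly why the hypothesis $b_n\ge(\log n)^{-1}$ and the lower cutoff $c\log n/n$ are imposed.

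Next I would invoke the appendix result (the uniform-in-bandwidth LIL for kernel-type estimators of the form \eqref{ge}) applied to the oracle estimator: because $U_i,V_i$ are genuinely uniform on $[0,1]$, the pair $(\phi^{-1}(U_i),\phi^{-1}(V_i))$ has a density that is bounded on $[a,b]^2$, the kernel $K$ is a bounded, compactly supported function of bounded variation, and the bandwidth range $[c\log n/n,\,b_n]$ with $b_n\ge(\log n)^{-1}$ is precisely the regime covered by that theorem. This yields almost surely
\[
\limsup_{n\to\infty}\Big\{R_n\sup_{\frac{c\log n}{n}\le h\le b_n}\sup_{(u,v)\in(0,1)^2}\big|\hat C_{n,h}^{(T),\mathrm{oracle}}(u,v)-\mathbb E\hat C_{n,h}^{(T),\mathrm{oracle}}(u,v)\big|\Big\}\le \Sigma,
\]
where $\Sigma$ is the variance-type constant produced by the appendix theorem; the multiplicative structure $K(x)K(y)$ together with $\int K=1$ and $\|K\|_\infty\le 1$ (or whatever normalization the paper fixes) forces $\Sigma\le 3$, matching Theorem \ref{t1}. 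Finally I would bridge the two expectations: $\mathbb E\hat C_{n,h}^{(T)}-\mathbb E\hat C_{n,h}^{(T),\mathrm{oracle}}$ is deterministic and, by the same Taylor bound on $\phi^{-1}$ and the $O(1/n)$ bias of $\tfrac{n}{n+1}F_n$, is $o(R_n^{-1})$ uniformly in $h$ and $(u,v)$. Assembling the three pieces via the triangle inequality gives \eqref{ee2}.

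The main obstacle I anticipate is the uniformity of the pseudo-observation replacement \emph{over the whole range of $h$ and over $(u,v)\in(0,1)^2$ including near the boundary}, not just on a fixed compact $[a,b]^2$. For $h$ as small as $c\log n/n$ the factor $1/h$ blows up like $n/\log n$, so the naive bound $R_n\cdot h^{-1}\cdot\sup_i|\hat U_i-U_i|$ is $\asymp n^{1/2}(\log n)^{-1}\cdot(\log\log n/n)^{1/2}\cdot(\log n/n)^{-1}$, which does \emph{not} obviously vanish; one must instead exploit that $K$ is Lipschitz (or of bounded variation) and that the number of indices $i$ for which the argument falls in the support of $K$ is itself $O(nh)$ with high probability, so that the error is really of order $R_n\cdot\sup_i|\hat U_i-U_i|$ times a bounded quantity rather than times $1/h$. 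Making this localization argument rigorous, uniformly in the bandwidth, is the delicate point; it is handled by the same empirical-process tools (Talagrand's inequality, a chaining/blocking argument over a geometric grid of bandwidths $h_j=2^j c\log n/n$) that power the appendix theorem, and the boundary behaviour is absorbed because $\phi^{-1}$ maps $(0,1)$ into itself and $K$ is compactly supported, so only a bounded band of pseudo-observations ever contributes.
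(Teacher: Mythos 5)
Your proposal misses the central mechanism by which the paper obtains the constant $3$, and this is a genuine gap rather than a stylistic difference. In the paper, the Mason--Swanepoel-type result (Proposition \ref{p1}) is \emph{not} applied to the kernel estimator itself to produce a LIL constant; it is applied to the class $\mathcal{G}$ of \emph{differences} $K(\cdot)-\mathbb{I}\{s\le u,t\le v\}$, whose variance is $O(h)$ (condition (G.ii)). This shows that $\sqrt{n}\hat{D}_{n,h}^{(T)}(u,v)$ and the empirical copula process $\tilde{\mathbb{C}}_n(u,v)$ differ by $O(\sqrt{h(|\log h|\vee\log\log n)})$ uniformly in $h$ and $(u,v)$, which after dividing by $\sqrt{\log\log n}$ is $O(\sqrt{b_n})=o(1)$ by Corollary \ref{crl1}. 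The constant $3$ then comes from an entirely separate ingredient: the strong approximation of $\mathbb{C}_n$ by a two-parameter Kiefer process $\mathbb{K}_C^\ast$ (Zari) combined with Wichura's law of the iterated logarithm for that Gaussian process. Your claim that the appendix theorem applied to the oracle estimator ``forces $\Sigma\le 3$'' via the kernel normalization is unsupported: concentration/empirical-process bounds of Talagrand type yield unspecified multiplicative constants, not the sharp value $3$, and nothing in your argument replaces the Kiefer-process step.

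A second, structural problem is your treatment of the pseudo-observations. You propose to replace $\hat U_i,\hat V_i$ by the true uniforms and to control the perturbation by a Taylor bound carrying a factor $1/h$; as you yourself note, this does not vanish for $h$ as small as $c\log n/n$, and your sketched localization fix is not carried out. The paper avoids this difficulty altogether: the random monotone maps $\zeta_{1,n}=\hat F_n\circ F^{-1}$ and $\zeta_{2,n}=\hat G_n\circ G^{-1}$ are absorbed into the function class $\mathcal{G}$, which is indexed over \emph{all} nondecreasing $\zeta_1,\zeta_2:[0,1]\to[0,1]$, so the uniform-in-bandwidth bound holds simultaneously over the random transformations and no oracle comparison is needed. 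To repair your proof you would need to (i) drop the oracle reduction in favour of this enlarged-class device (or make the localization argument rigorous), and (ii) add the empirical copula process approximation plus the Kiefer-process/Wichura step, since that is the only source of the bound $3$.
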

\begin{theorem}\label{t3}
%Suppose that the copula function $C(u,v)$ has bounded first order partial derivatives on $(0, 1)^2$. Then
For any sequence of positive constants $(b_n)_{n\geq 1}$ satisfying $0<b_n<1, b_n\rightarrow 0$, $b_n\geq (\log n)^{-1}$, and for some $c>0$, we have almost surely 
\begin{equation}\label{eeq2}
\limsup_{n\rightarrow\infty}\left\{R_n\sup_{\frac{c\log n}{n}\le h\le b_n}\sup_{(u,v)\in(0,1)^2}\left|\hat{C}_{n,h}^{(MR)}(u,v)-\mathbb{E}\hat{C}_{n,h}^{(MR)}(u,v)\right|\right\}\leq 3.
\end{equation}
\end{theorem}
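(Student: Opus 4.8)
\textbf{Proof proposal for Theorem \ref{t3}.}
The plan is to reduce the mirror-reflection estimator, via the representation \eqref{mrf}--\eqref{bigz}, to a finite sum of nine blocks, each of which is a kernel-type estimator of the form \eqref{ge} with the identity transformation and a multiplicative kernel, and then apply the same machinery that underlies Theorems \ref{t1} and \ref{t2}. Concretely, each $Z_n(l,u,v)$ is built from pseudo-observations $(\hat U_i^{(l)},\hat V_i^{(l)})$ that are fixed affine images ($w\mapsto \pm w$ or $w\mapsto 2-w$) of the empirical-quantile-based pseudo-observations; since these maps are deterministic and Lipschitz with constant $1$, the key uniform-in-bandwidth result proved in the appendix (the Mason--Swanepoel-type theorem) applies to each block with the \emph{same} bandwidth range $c\log n/n\le h\le b_n$ and yields, almost surely,
\[
\limsup_{n\to\infty}\Bigl\{R_n\sup_{\frac{c\log n}{n}\le h\le b_n}\sup_{(u,v)\in(0,1)^2}\bigl|Z_n(l,u,v)-\mathbb{E}Z_n(l,u,v)\bigr|\Bigr\}\le \varepsilon_l
\]
for a constant $\varepsilon_l$ that can be taken as small as one wishes after the normalization is tracked through the bound for a single kernel block (the bound $3$ in Theorems \ref{t1}--\ref{t2} comes from bounding the relevant analogue by the number of ``corner'' terms, so each individual $Z_n$ contributes a genuinely small amount).

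The second step is the bookkeeping of the finitely many corner corrections. In \eqref{mrf} each $l$ enters through the combination $Z_n(l,u,v)-Z_n(l,u,0)-Z_n(l,0,v)+Z_n(l,0,0)$, so by the triangle inequality the centered deviation of $\hat C_n^{(MR)}$ is bounded by $\sum_{l=1}^9$ of four times the single-block sup-deviation above. Because the supremum over $(u,v)\in(0,1)^2$ already dominates the three boundary evaluations (these are just restrictions $u=0$, $v=0$, or both), one does not even lose the factor $4$ in a crude sense; but even keeping it, one gets
\[
R_n\sup_{h}\sup_{(u,v)}\bigl|\hat C_n^{(MR)}-\mathbb{E}\hat C_n^{(MR)}\bigr|\le \sum_{l=1}^{9} 4\,R_n\sup_{h}\sup_{(u,v)}\bigl|Z_n(l,\cdot,\cdot)-\mathbb{E}Z_n(l,\cdot,\cdot)\bigr|,
\]
so that the $\limsup$ is at most $\sum_{l=1}^9 4\varepsilon_l$. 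One then goes back into the appendix bound to verify that the per-block constant is small enough — roughly $\varepsilon_l\le 1/12$ — that the total is $\le 3$; this is exactly the same arithmetic as in Theorem \ref{t1}, where a single composite estimator built from a handful of kernel pieces is controlled by the constant $3$, and it is where the specific value $3$ on the right-hand side is pinned down.

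The main obstacle is \emph{not} the combinatorics but the verification that the appendix theorem is genuinely applicable to each $Z_n(l,u,v)$ uniformly over the stated $h$-range, despite two complications: (i) the pseudo-observations $\hat U_i,\hat V_i$ are not the true uniforms $F(X_i),G(Y_i)$ but their empirical-quantile transforms $\tfrac{n}{n+1}F_n(X_i)$, etc., so one must absorb the Bahadur--Kiefer-type discrepancy $\sup_i|\hat U_i-U_i|=O(n^{-1}\log\log n)$ into the error term and check it is negligible at scale $R_n^{-1}$ uniformly in $h\ge c\log n/n$; and (ii) the reflected points $2-\hat U_i$ can fall outside $[0,1]$, so the class of functions $(u,v)\mapsto K((u-\hat U_i^{(l)})/h)K((v-\hat V_i^{(l)})/h)$ must be shown to still form a bounded VC-type class with the envelope and covering-number bounds required by the appendix result — this is where the boundedness and compact support of $k$ (hence of $K$ up to the constant tails) is used. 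Once (i) and (ii) are dispatched, the proof is a direct assembly of the nine uniform-in-bandwidth LILs; I expect the write-up to mirror, almost verbatim, the proof of Theorem \ref{t1} with ``nine reflected copies'' in place of ``one estimator.''
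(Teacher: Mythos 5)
Your reduction of $\hat C_{n}^{(MR)}$ to the nine blocks $Z_n(l,u,v)$ matches the paper's starting point, but the way you then produce the constant $3$ does not work and misses the actual mechanism. You assert that each block satisfies a uniform-in-bandwidth LIL with a constant $\varepsilon_l$ that "can be taken as small as one wishes," and then sum $\sum_{l=1}^{9}4\varepsilon_l\le 3$ by positing $\varepsilon_l\le 1/12$. This is not justified and is essentially false: each centered block $Z_n(l,\cdot,\cdot)-\mathbb{E}Z_n(l,\cdot,\cdot)$ is a nondegenerate empirical-process-type quantity whose LIL constant is a fixed positive number determined by its variance structure (of order $\sup\sqrt{2\sigma^2(u,v)}$), not a tunable parameter; there is no reason it should be as small as $1/12$, and the Mason--Swanepoel machinery in the appendix gives no such per-block constant. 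Your diagnosis of where the $3$ in Theorems \ref{t1}--\ref{t2} comes from ("the number of corner terms") is also incorrect.

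In the paper the argument runs differently, and in one shot for the whole estimator. Proposition \ref{p1} and Corollary \ref{crl1} (the Mason--Swanepoel step) are used only to show that $\sqrt{n}\hat D^{(\cdot)}_{n,h}(u,v)-\tilde{\mathbb{C}}_n(u,v)$ is $O(\sqrt{b_n\log\log n})=o(\sqrt{\log\log n})$ uniformly in $h$ and $(u,v)$; i.e.\ the smoothed deviation is asymptotically indistinguishable from the (unsmoothed) empirical copula process at the LIL scale. The constant $3$ is then imported from an entirely separate ingredient: the strong approximation of $\mathbb{C}_n$ by the Gaussian process $\mathbb{K}_C^\ast(u,v,n)=\mathbb{K}_C(u,v,n)-\mathbb{K}_C(u,1,n)\partial_uC-\mathbb{K}_C(1,v,n)\partial_vC$ (Zari), combined with Wichura's multiparameter LIL, which bounds each of the three Kiefer-process terms and yields $\limsup\sup_{(u,v)}|\mathbb{K}_C^\ast|/\sqrt{2n\log\log n}\le 3$ --- that is where the $3$ lives. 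Your proposal never invokes the Kiefer-process approximation or Wichura's LIL, so even after your (reasonable) points (i) and (ii) about the pseudo-observations and the reflected VC classes are settled, you have no route to the numerical bound in \eqref{eeq2}. To repair it, replace the per-block summation by: (a) apply Corollary \ref{crl1} to show the full centered MR estimator tracks $\tilde{\mathbb{C}}_n$ up to $o(\sqrt{\log\log n}/\sqrt{n})$, and (b) conclude with the Kiefer/Wichura bound on $\tilde{\mathbb{C}}_n$ itself.
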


%\textit{Proof of theorem} $\Rightarrow$ \ref{prooft1}
The proofs of Theorems \ref{t1},\ref{t2},\ref{t3} are similar and are postponded until Section 3. They are obtained by combining a general theorem of Mason and Swanpoel \cite{r5}(2010) for proving the uniform in bandwidth consistency of kernel-type function estimators, and a law of the iterated logarithm for Kiefer processes due to Wichura \cite{r10}(1973).\\

\textbf{Remark}.
{\rm \begin{itemize}
\item[1)] Under smoothness conditions on the copula $C$, namely the existence bounded second-order partial derivatives, ensuring the uniform almost sure convergence of the bias of the estimators  to zero, we obtain the strong uniform in bandwidth consistency of the kernel copula estimators $\hat{C}_{n,h}^{(LL)},\hat{C}_{n,h}^{(MR)}$ and $\hat{C}_{n,h}^{(T)}$.
\item[2)] The uniformity on the bandwidth allows the use of a large bandwidth selection methods including the method of shrinkage proposed by Omelka \textit{et al.}\cite{r7}(2009).
\item[3)] As the boundary bias is present, the uniform consistency of these kernel estimators is not valid over the entire $[0,1]^2$, but it is true for all  $(u,v)\in[h,1-h]^2,\; 0<h<1$.
\item[4)] If we consider a data-driven bandwidth ; that is $h=\hat{h}_n$, we have the probability versions of Theorems \ref{t1},\ref{t2} and \ref{t3}, stated in  Corollary \ref{cr1} below.\\
\end{itemize} }
%$\hat{C}_n^{(\cdot)}$.}$\mathbb{E}\hat{C}_n^{(\cdot)}- C $
% Theorem \ref{t1} represents a uniform in bandwidth law of the iterated logarithm for the maximal deviation of the estimator (\ref{e1}).As in \cite{r14} we may use it, in its probability version, to construct simultaneous asymptotic confidence bands from the estimator \eqref{e1}. In this purpose, we must ensure before hand that the bias term $B_{n}(u,v)=\mathbb{E}\hat{C}_{n}^{(T)}(u,v) - C(u,v)$ converges uniformly to $0$,  with a same rate $R_n$, as $n\rightarrow\infty$. But this requires that the  copula function $C(u,v)$ admits bounded second-order partial derivatives on the unit square $[0,1]^2$.
%\end{remark}

Let $\hat{C}_{n,h}^{(\cdot)}$ denote a kernel estimator of the copula $C$, where $(\cdot)$ stands for $(LL), (MR)$ and $(T)$, respectively. One has
\begin{corollary}\label{cr1}
Under the assumptions of Theorems \ref{t1},\ref{t2} or \ref{t3}, we have for all $0<\epsilon <1$ and for any sequence of data-driven bandwidth $\hat{h}_n$ satisfying \begin{equation} \label{ee4'}
 \mathbb{P}(\frac{c\log n}{n}\le \hat{h}_n\le b_n)\rightarrow 1,\; n\rightarrow\infty,
 \end{equation}
\begin{equation}\label{ee2}
\mathbb{P}\left\{R_n\sup_{(u,v)\in(0,1)^2}\left|\hat{C}_{n,\hat{h}_n}^{(\cdot)}(u,v)-\mathbb{E}\hat{C}_{n,\hat{h}_n}^{(\cdot)}(u,v)\right|>  3(1+\epsilon)\right\}=o(1).
\end{equation}
\end{corollary}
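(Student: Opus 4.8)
The plan is to deduce the probability statement from the almost sure statements in Theorems \ref{t1}, \ref{t2}, \ref{t3} by a standard event-splitting argument. First I would fix $0<\epsilon<1$ and write $(\cdot)$ for whichever of $(LL)$, $(MR)$, $(T)$ is under consideration, so that the corresponding theorem applies. Introduce the random quantity
\[
\Delta_n := R_n\sup_{\frac{c\log n}{n}\le h\le b_n}\sup_{(u,v)\in(0,1)^2}\left|\hat{C}_{n,h}^{(\cdot)}(u,v)-\mathbb{E}\hat{C}_{n,h}^{(\cdot)}(u,v)\right|,
\]
which is exactly the object controlled by the relevant theorem: almost surely $\limsup_{n\to\infty}\Delta_n\le 3$. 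Consequently, for the fixed $\epsilon>0$, the event $\{\Delta_n > 3(1+\epsilon)\}$ occurs only finitely often with probability one, hence $\mathbb{P}(\Delta_n > 3(1+\epsilon)) \to 0$ as $n\to\infty$; this is the elementary fact that $\limsup a_n \le L$ a.s.\ implies $\mathbb{P}(a_n > L') = o(1)$ for every $L' > L$.

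Next I would bound the target probability by splitting on the good bandwidth event $A_n := \{\frac{c\log n}{n}\le \hat{h}_n\le b_n\}$. On $A_n$ the data-driven bandwidth $\hat{h}_n$ lies in the range over which the supremum defining $\Delta_n$ is taken, so pointwise on $A_n$,
\[
R_n\sup_{(u,v)\in(0,1)^2}\left|\hat{C}_{n,\hat{h}_n}^{(\cdot)}(u,v)-\mathbb{E}\hat{C}_{n,\hat{h}_n}^{(\cdot)}(u,v)\right| \le \Delta_n.
\]
Therefore
\[
\mathbb{P}\left\{R_n\sup_{(u,v)}\left|\hat{C}_{n,\hat{h}_n}^{(\cdot)}-\mathbb{E}\hat{C}_{n,\hat{h}_n}^{(\cdot)}\right|>3(1+\epsilon)\right\}
\le \mathbb{P}\left(\{\Delta_n > 3(1+\epsilon)\}\cap A_n\right) + \mathbb{P}(A_n^c)
\le \mathbb{P}\left(\Delta_n > 3(1+\epsilon)\right) + \mathbb{P}(A_n^c).
\]
By the previous paragraph the first term is $o(1)$, and by hypothesis \eqref{ee4'} we have $\mathbb{P}(A_n)\to 1$, i.e.\ $\mathbb{P}(A_n^c) = o(1)$. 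Adding the two $o(1)$ terms gives the claimed conclusion \eqref{ee2}.

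There is no genuine obstacle here; the corollary is a soft consequence of the theorems. The only points requiring a modicum of care are: (i) checking measurability of $\Delta_n$ and of the data-driven version, so that the probabilities are well defined — this is routine since the estimators are continuous in $(u,v,h)$ and the suprema can be taken over countable dense subsets; and (ii) making sure the inequality on $A_n$ is genuinely pointwise (deterministic) and does not itself require the almost sure event from the theorem, which it does not. One subtlety worth a sentence in the write-up is that the conclusion holds for \emph{each} fixed $\epsilon$, with the rate of the $o(1)$ depending on $\epsilon$; no uniformity in $\epsilon$ is claimed or needed.
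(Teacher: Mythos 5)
Your argument is correct and is exactly the standard deduction the paper intends (the paper in fact states Corollary \ref{cr1} without proof, treating it as an immediate consequence of the theorems): convert the almost sure $\limsup$ bound into $\mathbb{P}(\Delta_n>3(1+\epsilon))=o(1)$ and split on the event $\{\frac{c\log n}{n}\le\hat{h}_n\le b_n\}$, whose complement has vanishing probability by \eqref{ee4'}. Nothing further is needed.
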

\textbf{Remark}. This result enables us to construct simultaneous confidence bands for the copula curve $C(u,v), 0<u,v<1$, with  asymptotic confidence level $100\%$, under the condition that the bias of the estimator $\mathbb{E}\hat{C}_{n,\hat{h}_n}^{(\cdot)}-C(u,v)$ converges uniformly to zero with the same rate $R_n$, as $n\rightarrow\infty$, i.e.  for all $0<\epsilon <1$,
\begin{equation}\label{e2}
\mathbb{P}\left\{R_n\sup_{(u,v)\in(0,1)^2}\left|\mathbb{E}\hat{C}_{n,\hat{h}_n}^{(\cdot)}(u,v)-C(u,v)\right|>  \epsilon\right\}=o(1).
\end{equation}
An example of such confidence bands is provided in \cite{r1} for the local linear estimator.  
\section{Proofs}
The proofs of the theorems are similar. To simplify we will establish the results for a generalized estimator $\hat{C}_{n,h}^{(\cdot)}$ defined as follows. Let $\phi:[0,1]\mapsto[0,1]$ be an increasing transformation. For a bivariate kernel $K(\cdot,\cdot)$ and $0<u,v<1$, define the estimator
\begin{equation}\label{ge}
\hat{C}_{n,h}^{(\cdot)}(u,v)=\frac{1}{n}\sum_{i=1}^{n}K\left(\frac{\phi^{-1}(u)-\phi^{-1}(\hat{U}_i)}{h},\frac{\phi^{-1}(v)-\phi^{-1}(\hat{V}_i)}{h} \right)
\end{equation}
and set
$$\hat{D}_{n,h}^{(\cdot)}(u,v)=:\hat{C}_{n,h}^{(\cdot)}(u,v)-\mathbb{E}\hat{C}_{n,h}^{(\cdot)}(u,v).$$
We call $\hat{D}_{n,h}^{(\cdot)}(u,v)$ the deviation of the estimator $\hat{C}_{n,h}^{(\cdot)}(u,v)$ from its expectation and we will study its behavior by using general empirical process theory.\\
\indent We remark that if the kernel $K(\cdot,\cdot)$ is multiplicative, i.e. $K(x,y)=K(x)K(y)$, we obtain directly the transformation estimator. If $\phi(t)=t$, the identity function and the local linear kernels $K_{u,h}(\cdot)$ and $K_{v,h}(\cdot)$ are employed, then we obtain the local linear kernel estimator. Note that for the mirror-reflection estimator, we consider the following decomposition due to Omelka \textit{et al} (2009),
 \begin{equation}\label{mirror-local}
\hat{C}_{n}^{(MR)}(u,v)=\sum_{l=1}^9\left[Z_n(l,u,v)-Z_n(l,u,0)-Z_n(l,0,v)+Z_n(l,0,0)\right],
\end{equation}
where
\begin{equation}\label{bigz}
Z_n(l,u,v)=\frac{1}{n}\sum_{i=1}^nK\left(\frac{u-\hat{U}_i^{(l)}}{h_n}\right)K\left(\frac{v-\hat{V}_i^{(l)}}{h_n}\right).
\end{equation}
Setting $\phi(t)=t$ and using a multiplicative kernel, one can see that each quantity $Z_n(l,u,v)$ may be put in the form \eqref{ge}.\\

Let $H_n$, $F_n$ and $G_n$ be the empirical cumulative distribution functions of $H$, $F$ and $G$, respectively. Then  the estimator based directly on Sklar's Theorem is given by
$$ C_n(u,v)= H_n(F_n^{-1}(u),G_n^{-1}(v)),$$
with $F_n^{-1}(u)=\inf\{x:F_n(x)\geq u\}$ and $G_n^{-1}(v)=\inf\{x:F_n(x)\geq v\}$. The bivariate empirical copula process is defined as
$$\mathbb{C}_n(u,v)=\sqrt{n}[C_n(u,v)- C(u,v)],\qquad (u,v)\in [0,1]^2.$$
Introduce the following quantity.
$$ \widetilde{C}_n(u,v)=\frac{1}{n}\sum_{i=1}^n\mathbb{I}\{U_i\leq u,V_i\leq v\}$$
which represents the uniform bivariate empirical distribution function based on a sample $(U_1,V_1),\cdots,(U_n,V_n)$ of i.i.d random  variables uniformly distributed on $[0,1]^2$. 
%Then the uniform bivariate empirical process is defined by $$\alpha_n(u,v)=\sqrt{n}[\tilde{C}_n(u,v)- uv].$$
Define the following empirical process
 $$\mathbb{\widetilde{C}}_n(u,v)=\sqrt{n}[\widetilde{C}_n(u,v)- C(u,v)],\quad (u,v)\in[0,1]^2.$$
Then, wecan easily prove that that 
\begin{equation}\label{c3}
\mathbb{\widetilde{C}}_n(u,v)=\mathbb{C}_n(u,v)+ \frac{1}{\sqrt{n}}.
\end{equation}
For any given increasing transformation $\phi$ and $(u,v)\in [0,1]^2$, define 
\begin{eqnarray*}
&g_{n,h}=\hat{C}_{n,h}^{(\cdot)}(u,v)-\widetilde{C}_n(u,v)&\\
&  = \frac{1}{n}\sum_{i=1}^{n}\left [K\left(\frac{\phi^{-1}(u)-\phi^{-1}(\hat{U}_i)}{h},\frac{\phi^{-1}(v)-\phi^{-1}(\hat{V}_i)}{h} \right)- \mathbb{I}\{U_i\leq u,V_i\leq v\}\right ]&\\
& =\frac{1}{n}\sum_{i=1}^{n}\left [K\left(\frac{\phi^{-1}(u)-\phi^{-1}(\hat{F}_n oF^{-1}(U_i))}{h},\frac{\phi^{-1}(v)-\phi^{-1}(\hat{G}_n oG^{-1}(V_i))}{h} \right)-\mathbb{I}\{U_i\leq u,V_i\leq v\}\right]& \\
& =: \frac{1}{n}\sum_{i=1}^{n}g(U_i,V_i,h),&
\end{eqnarray*}
where $g$ belongs to the class of measurable functions $\mathcal{G}$ defined as 
$$
\mathcal{G}=\left\{\begin{array}{c} 
g:(s,t,h)\mapsto g(s,t,h)=K\left(\frac{\phi^{-1}(u)-\phi^{-1}(\zeta_{1,n}(s))}{h}, \frac{\phi^{-1}(v)-\phi^{-1}(\zeta_{2,n}(t))}{h}\right)- \mathbb{I}\{s\leq u,t\leq v\},\\
 u,v\in[0,1], 0<h<1 \,\text{and }\, \zeta_{1,n}\zeta_{2,n}:[0,1]\mapsto [0,1] \,\text{nondecreasing.}
\end{array} \right\}
$$
Since $\mathbb{E}\widetilde{C}_n(u,v)=C(u,v)$, one can observe that 
$$
\sqrt{n}\vert g_{n,h}- \mathbb{E}g_{n,h}\vert= \vert\sqrt{n}\hat{D}_{n,h}^{(\cdot)}(u,v)-\mathbb{\widetilde{C}}_n(u,v)\vert.
$$

Now, we want to apply the main Theorem in \cite{r5} due to Mason and Swanpoel. Towards this end, the class of functions $\mathcal{G}$ must verify the following  four conditions: 
\begin{itemize}
\item[(G.i)]There exists a finite constant $\kappa>0$ such that
 $$ \sup_{0\leq h\leq 1}\sup_{g\in \mathcal{G}}\left\|g\left(\cdot,\cdot,h\right)\right\|_\infty=\kappa <\infty. $$
\item[(G.ii)]\ \ \ There exists a constant $C'>0$ such that for all $h\in [0,1]$,
$$ \sup_{g\in \mathcal{G}}\mathbb{E}\left[g^2\left(U,V,h\right)\right]\leq C'h. $$
\item[(F.i)]\ \ \ 
$\mathcal{G}$ satisfies the uniform entropy condition, i.e., 
$$\exists \, C_0>0, \nu_0>0\ :\ N\left(\epsilon,\mathcal{G}\right)\leq C_0\epsilon^{-\nu_0}.$$
%avec $\displaystyle \mathcal{G}_\gamma=\left\{g_\gamma=h^\gamma g: g\in\mathcal{G} ,\ \  h\in [0,1]\right\}\ \ \left(0<\epsilon<1\right)$\\
\item[(F.ii)]\ \ \ $\mathcal{G}$ is a pointwise measurable class, i.e there exists a countable sub-class $\mathcal{G}_0$ of $\mathcal{G}$ such that for all $g\in \mathcal{G}$, there exits $\left(g_m\right)_m\subset \mathcal{G}_0$ such that $g_m\longrightarrow g.$\\
\end{itemize}

The checking of these conditions will be done in Appendix and constitutes the proof of the following proposition.
\begin{proposition}\label{p1}
Suppose that the copula function $C$ has bounded first order partial derivatives on $(0, 1)^2$ and the transformation $\phi$ admits a bounded derivative $\phi'$. Then assuming (G.i), (G.ii), (F.i) and (F.ii), we have for some $c > 0,\ 0 < h_0 < 1,$ with probability one, 
$$
\limsup_{n\rightarrow\infty}\sup_{\frac{c\log n}{n}\leq h \leq h_0}\sup_{(u,v)\in(0,1)^2}
\frac{|\sqrt{n}\hat{D}_{n,h}^{(\cdot)}(u,v)-\tilde{\mathbb{C}}_n(u,v)|}{\sqrt{h(|\log h\vert\vee\log\log n)}}=A(c),
$$
where $A(c)$ is a positive constant.
\end{proposition}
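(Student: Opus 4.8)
\medskip
\noindent\textit{Proof plan.} Most of the reduction has already been carried out above; what is left is to read the displayed quantity as a normalised oscillation of the uniform empirical process indexed by $\mathcal{G}$ and to invoke the uniform-in-bandwidth functional law of Mason and Swanpoel \cite{r5}. Write $P$ for the (uniform on $[0,1]^2$) law of $(U,V)$, $P_n=\frac1n\sum_{i=1}^n\delta_{(U_i,V_i)}$ for the empirical measure, and $\|P_n-P\|_{\mathcal{H}}=\sup_{g\in\mathcal{H}}|(P_n-P)g|$ for a function class $\mathcal{H}$. By the identity recorded above, for each fixed $(u,v,h)$,
$$\sqrt{n}\,\bigl|g_{n,h}-\mathbb{E}g_{n,h}\bigr|=\bigl|\sqrt{n}\,\hat{D}_{n,h}^{(\cdot)}(u,v)-\tilde{\mathbb{C}}_n(u,v)\bigr|,$$
where $g_{n,h}=P_n g^{(u,v,h)}$ and $g^{(u,v,h)}\in\mathcal{G}$ is the member attached to $(u,v,h)$ and to the random nondecreasing maps $\zeta_{1,n}=\hat{F}_n\circ F^{-1}$, $\zeta_{2,n}=\hat{G}_n\circ G^{-1}$. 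Since $\mathcal{G}$ was, by design, built so as to contain the functions generated by \emph{every} pair of nondecreasing maps $[0,1]\to[0,1]$, this random pair is absorbed into the (deterministic) index set, and, writing $\mathcal{G}_h$ for the subclass at fixed bandwidth $h$, one obtains for every $h$
$$\sup_{(u,v)\in(0,1)^2}\bigl|\sqrt{n}\,\hat{D}_{n,h}^{(\cdot)}(u,v)-\tilde{\mathbb{C}}_n(u,v)\bigr|\ \le\ \sqrt{n}\,\|P_n-P\|_{\mathcal{G}_h}.$$
It therefore suffices to control $\sqrt{n}\,\|P_n-P\|_{\mathcal{G}_h}$, uniformly over $c\log n/n\le h\le h_0$, after dividing by $\sqrt{h(|\log h|\vee\log\log n)}$.

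The scheme $(\mathcal{G}_h)_{0<h<h_0}$ is exactly of the kind handled by Mason and Swanpoel: a uniformly bounded envelope $\kappa$ (condition (G.i)); variance of order $O(h)$ as $h\to0$, uniformly over the class (condition (G.ii)); a VC-type polynomial uniform-entropy bound (condition (F.i)); and pointwise measurability (condition (F.ii)). These are precisely the hypotheses assumed in the statement, and their verification for the class $\mathcal{G}$ at hand --- which is where the assumptions that $C$ has bounded first-order partial derivatives on $(0,1)^2$ and that $\phi'$ is bounded get used --- is carried out in the Appendix. Granting (G.i)--(F.ii), the Mason--Swanpoel theorem furnishes a finite constant, depending only on $c$, the entropy exponents $(C_0,\nu_0)$ and the constants $\kappa$ and $C'$, that dominates
$$\limsup_{n\to\infty}\ \sup_{\frac{c\log n}{n}\le h\le h_0}\ \frac{\sqrt{n}\,\|P_n-P\|_{\mathcal{G}_h}}{\sqrt{h(|\log h|\vee\log\log n)}}\qquad\text{a.s.}$$
Combined with the comparison of the first paragraph, this bounds the $\limsup$ appearing in the proposition from above by a finite quantity; by the Hewitt--Savage zero--one law that $\limsup$ is a.s. a deterministic number $A(c)$; and $A(c)>0$ follows from the companion lower bound of the Mason--Swanpoel theorem, or, more elementarily, from the classical law of the iterated logarithm applied to the non-degenerate one-parameter subfamily obtained on fixing a single $(u,v)\in(0,1)^2$ together with $\zeta_{1,n}=\zeta_{2,n}=\mathrm{id}$ (which reduces $\hat{C}_{n,h}^{(\cdot)}$ to a genuine smoothed empirical distribution function with strictly positive asymptotic variance coefficient, and hence yields a positive $\liminf$ of the same order).

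The real work --- the part I expect to be the main obstacle --- is not the citation of the theorem but the preliminary verification, deferred to the Appendix, that enlarging the index set to \emph{all} monotone $\zeta_{1,n},\zeta_{2,n}$ leaves (G.ii) and (F.i) intact. For (G.ii) the key point is that, for any such pair, the function $g(\cdot,\cdot,h)$ vanishes outside a set of Lebesgue measure $O(h)$ --- a thin strip around the curve on which $\mathbb{I}\{s\le u,\,t\le v\}$ changes value --- so that $\mathbb{E}[g^2(U,V,h)]=O(h)$; here the monotonicity of the $\zeta_{i,n}$, the uniform closeness $\|\zeta_{i,n}-\mathrm{id}\|_\infty\to0$ (Glivenko--Cantelli for $\hat{F}_n,\hat{G}_n$), the Lipschitz property of $C$ coming from the bounded partials, and the boundedness of $\phi'$ are used together. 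For (F.i) one must bound the covering numbers of the family of compositions of the fixed kernel with the two-parameter class of monotone transformations, a further VC-type estimate. Once (G.i)--(F.ii) are secured, Proposition \ref{p1} is a direct consequence of \cite{r5}, supplemented by the zero--one law and the lower bound just described.
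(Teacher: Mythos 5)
Your proposal is correct and follows essentially the same route as the paper: identify $\sqrt{n}\,\hat{D}_{n,h}^{(\cdot)}(u,v)-\tilde{\mathbb{C}}_n(u,v)$ with the empirical process indexed by $\mathcal{G}$ and invoke the Mason--Swanepoel uniform-in-bandwidth theorem under (G.i)--(F.ii). The only real difference is one of emphasis: since the proposition takes (G.i)--(F.ii) as hypotheses you are entitled to grant them, whereas the paper's written proof consists entirely of verifying those four conditions in the appendix (via $K^2\le K$, Taylor expansions of $C$ and $\phi$, and the VC-subgraph lemmas of van der Vaart and Wellner) --- and your observation that (G.ii) can only hold after restricting the monotone maps $\zeta_{i,n}$ to a neighbourhood of the identity is precisely the delicate step the paper handles, rather informally, by appealing to Chung's LIL for $\hat{F}_n\circ F^{-1}$.
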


\begin{corollary}\label{crl1}
 Under the assumptions of Proposition \ref{p1}, one has for any sequence of constants $0<b_n<1,$ satisfying $\ b_n\rightarrow 0,\ b_n\geq (\log n)^{-1}$, with probability one, 
 $$
 \sup_{\frac{c\log n}{n}\leq h \leq b_n}\sup_{(u,v)\in(0,1)^2}
\frac{|\sqrt{n}\hat{D}_{n,h}^{(\cdot)}(u,v)-\tilde{\mathbb{C}}_n(u,v)|}{\sqrt{\log\log n}}=O(\sqrt{b_n}).
 $$
 \end{corollary}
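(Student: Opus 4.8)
The plan is to derive Corollary~\ref{crl1} directly from Proposition~\ref{p1} by a straightforward comparison of the two normalizing sequences on the range $\frac{c\log n}{n}\le h\le b_n$. First I would observe that Proposition~\ref{p1} gives, with probability one and for all $n$ large enough, the bound
\begin{equation*}
\sup_{\frac{c\log n}{n}\le h\le h_0}\sup_{(u,v)\in(0,1)^2}\bigl|\sqrt{n}\hat{D}_{n,h}^{(\cdot)}(u,v)-\tilde{\mathbb{C}}_n(u,v)\bigr|\le (A(c)+1)\sup_{\frac{c\log n}{n}\le h\le h_0}\sqrt{h\bigl(|\log h|\vee\log\log n\bigr)}.
\end{equation*}
Since $b_n\to 0$, eventually $b_n\le h_0$, so the supremum over $h$ can be restricted to $[\frac{c\log n}{n},b_n]$ on both sides. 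The whole problem then reduces to controlling the deterministic quantity $\sup_{h\le b_n}\sqrt{h(|\log h|\vee\log\log n)}$ and dividing by $\sqrt{\log\log n}$.

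Next I would analyze the function $\psi(h)=h(|\log h|\vee\log\log n)$ on the interval $[\frac{c\log n}{n},b_n]$. For $h$ in the sub-range where $|\log h|\le\log\log n$ (i.e.\ $h\ge (\log n)^{-1}$, which in particular covers $h$ near $b_n$ once $b_n\ge(\log n)^{-1}$), we have $\psi(h)=h\log\log n\le b_n\log\log n$, so $\sqrt{\psi(h)/\log\log n}\le\sqrt{b_n}$. For $h$ in the complementary sub-range $\frac{c\log n}{n}\le h<(\log n)^{-1}$, we have $\psi(h)=h|\log h|$, and since $h\mapsto h|\log h|$ is increasing for small $h$, this is at most $(\log n)^{-1}\log\log n\le b_n\log\log n$ using $b_n\ge(\log n)^{-1}$; hence again $\sqrt{\psi(h)/\log\log n}\le\sqrt{b_n}$. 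Combining the two ranges gives
\begin{equation*}
\sup_{\frac{c\log n}{n}\le h\le b_n}\sqrt{\frac{h(|\log h|\vee\log\log n)}{\log\log n}}\le\sqrt{b_n}
\end{equation*}
for all $n$ large, and multiplying by the almost-sure constant factor $A(c)+1$ yields the claimed $O(\sqrt{b_n})$ bound.

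I do not expect any real obstacle here: the result is essentially a bookkeeping consequence of Proposition~\ref{p1}, and the only mild care needed is in the case split for $\psi(h)$ and in verifying that the hypotheses $b_n\to 0$ and $b_n\ge(\log n)^{-1}$ are exactly what make both branches collapse to $b_n\log\log n$. One could streamline the argument by noting that on the entire admissible range $h\le b_n$ one always has $h(|\log h|\vee\log\log n)\le b_n\log\log n$ once $b_n\ge(\log n)^{-1}$, since then $|\log h|\vee\log\log n\le(|\log b_n|\vee\log\log n)\vee$ (the contribution from very small $h$, bounded as above) — but writing it as the explicit two-case estimate is cleanest. The passage from ``$\limsup$ of a ratio equals $A(c)$'' to a uniform-in-$n$ bound with constant $A(c)+1$ for large $n$ is the standard reading of an almost-sure $\limsup$ statement and needs no further comment.
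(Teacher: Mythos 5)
Your proposal is correct and follows essentially the same route as the paper: both arguments reduce the corollary to the deterministic bound $\sup_{h\le b_n} h(|\log h|\vee\log\log n)\le b_n\log\log n$, obtained from the monotonicity of $x\mapsto x|\log x|$ near $0$ together with the hypothesis $b_n\ge(\log n)^{-1}$ (the paper phrases this as $h(|\log h|\vee\log\log n)\le b_n(|\log b_n|\vee\log\log n)$ followed by $|\log b_n|/\log\log n\le 1$, while you split the $h$-range at $(\log n)^{-1}$, which amounts to the same estimate). No gaps.
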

\begin{proof}{( \textbf{Corollary \ref{crl1})}}\\
{\rm First, observe that the condition $b_n\geq (\log n)^{-1}$ implies
\begin{equation}\label{dc}
\frac{\vert \log b_n\vert}{\log\log n}\leq 1.
\end{equation}
Next, by the monotonicity of the function $x\mapsto x\vert\log x \vert$ on $[0,1/e]$, one can write for $n$ large enough, $h\vert\log h \vert\leq b_n\vert\log b_n \vert$ and hence,
\begin{equation}
h(\vert\log h \vert  \vee\log\log n)\leq b_n(\vert\log b_n \vert \vee\log\log n).
\end{equation}
Combining this and Proposition \ref{p1}, we obtain %from simple calculation that
$$
 \sup_{\frac{c\log n}{n}\leq h \leq b_n}\sup_{(u,v)\in(0,1)^2}
\frac{|\sqrt{n}\hat{D}_{n,h}(u,v)-\tilde{\mathbb{C}}_n(u,v)|}{\sqrt{b_n\log\log n\left(\frac{\vert \log b_n\vert}{\log\log n}\vee 1\right)}}=O(1).
 $$
Thus the Corollary \ref{crl1} follows from  \eqref{dc}.}
\end{proof}
Coming back to the proof of our theorems, we have to show that
\begin{equation}\label{lil}
\limsup_{n\rightarrow\infty}\sup_{\frac{c\log n}{n}\leq h \leq b_n}\sup_{(u,v)\in[0,1]^2}\frac{\left|\sqrt{n}\hat{D}_{n,h}^{(\cdot)}(u,v)\right|}{\sqrt{2\log\log n}}\leq 3.
\end{equation}
Towards this end, we make use of an approximation of the empirical copula process $\mathbb{C}_n$ by a Kiefer process (see e.g., Zari\cite{r11}, page 100). Let $\mathbb{W}(u,v,t)$ be a $3$-parameters  Wiener process defined on $[0,1]^2\times[0,\infty)$. Then the Gaussian process $\mathbb{K}(u,v,t)=\mathbb{W}(u,v,t)-\mathbb{W}(1,1,t).uv$ is called a $3$-parameters  Kiefer process defined on $[0,1]^2\times[0,\infty)$.\\
\indent By Theorem 3.2 in Zari\cite{r11}, for $d=2$, there exists a sequence of Gaussian processes $\left\{\mathbb{K}_{C}(u,v,n), u,v\in[0,1], n>0\right\}$ such that
$$ \sup_{(u,v)\in[0,1]^2}\left|\sqrt{n}\mathbb{C}_n(u,v)-\mathbb{K}_{C}^\ast(u,v,n)\right|=O\left(n^{3/8}(\log n)^{3/2}\right),$$ where 
$$\mathbb{K}_{C}^\ast(u,v,n)=\mathbb{K}_{C}(u,v,n)-\mathbb{K}_{C}(u,1,n)\frac{\partial C(u,v)}{\partial u}-\mathbb{K}_\mathbb{C}(1,v,n)\frac{\partial C(u,v)}{\partial v}.$$
This yields
\begin{equation}\label{c1}
\limsup_{n\rightarrow\infty}\sup_{(u,v)\in[0,1]^2}\frac{\left|\mathbb{C}_n(u,v)\right|}{\sqrt{2\log\log n}}=\limsup_{n\rightarrow\infty}\sup_{(u,v)\in[0,1]^2}\frac{\left|\mathbb{K}_{C}^\ast(u,v,n)\right|}{\sqrt{2n\log\log n}}.
\end{equation}
By the works of Wichura\cite{r10} on the  law of the iterated logarithm , for $d=2$, one has almost surely
\begin{equation}\label{c2}
\limsup_{n\rightarrow\infty}\sup_{(u,v)\in[0,1]^2}\frac{\left|\mathbb{K}_\mathbb{C}^\ast(u,v,n)\right|}{\sqrt{2n\log\log n}}\leq 3,
\end{equation}

%$$\Longleftrightarrow From (\ref{c1}) and (\ref{c2}) we can infer
\noindent which entails  
%Thus the both relations $(\ast)$ and $(\ast\ast)$ allow to write
$$\limsup_{n\rightarrow\infty}\sup_{(u,v)\in[0,1]^2}\frac{\left|\mathbb{C}_n(u,v)\right|}{\sqrt{2\log\log n}}\leq 3.$$
Since ${\mathbb{C}}_n(u,v)$ and $\tilde{\mathbb{C}}_n(u,v)$ are asymptotically equivalent in view of (\ref{c3}), one obtains
$$\limsup_{n\rightarrow\infty}\sup_{(u,v)\in[0,1]^2}\frac{\left|\tilde{\mathbb{C}}_n(u,v)\right|}{\sqrt{2\log\log n}}\leq 3.$$
Applying Corollary \ref{crl1} and the fact that $\sqrt{b_n}\rightarrow 0$, we obtain \eqref{lil} which proves the theorems.

\section*{Appendix}
\begin{proof}(\textbf{Proposition \ref{p1}})\\
Assume that the function $K(\cdot,\cdot)$ is the integral of a symmetric bounded kernel $k(\cdot,\cdot)$, supported on $[-1,1]^2$, i.e. $K(x,y)=\int_0^x\int_0^y k(s,t)dsdt$. We have to check (G.i), (G.ii), (F.i) and (F.ii).\\

\noindent \textbf{Checking for (G.i):}
Recall that $(U_i,V_i), i\geq 1$ are iid random variables uniformly distributed on $[0,1]^2$, $\zeta_{1,n}(U_i)=\hat{F}_n o F^{-1}(X_i)$ and $\zeta_{2,n}(U_i)=\hat{G}_n o G^{-1}(Y_i)$.
For any function $g\in\mathcal{G}$ and $0<h<1$, we can write 
\begin{eqnarray*}
g\left(U_i,V_i,h\right) & =&K\left(\frac{\phi^{-1}(u)-\phi^{-1}(\zeta_{1,n}(U_i))}{h},\frac{\phi^{-1}(v)-\phi^{-1}(\zeta_{2,n}(V_i))}{h}\right)- \mathbb{I}\{U_i\leq u,V_i\leq v\}\\
& =&\int_{-\infty}^{\frac{\phi^{-1}(u)-\phi^{-1}(\zeta_{1,n}(U_i))}{h}} \int_{-\infty}^{\frac{\phi^{-1}(v)-\phi^{-1}(\zeta_{2,n}(V_i))}{h}}k(s,t)dsdt-\mathbb{I}\{U_i\leq u,V_i\leq v\}\\
& =&\int_{-1}^{1}\int_{-1}^{1}\mathbb{I}\left\{U_i\leq \zeta_{1,n}^{-1}\circ\phi(\phi^{-1}(u)-th) ,V_i\leq \zeta_{2,n}^{-1}\circ\phi(\phi^{-1}(v)-sh)\right\}k(s,t)dsdt \\
 & & -\mathbb{I}\{U_i\leq u,V_i\leq v\}\\
&\leq &\int_{-1}^{1}\int_{-1}^{1}k(t)k(s,t)dsdt-\mathbb{I}\{U_i\leq u,V_i\leq v\}\ \leq 4 \|k\|^2+1,\\
\end{eqnarray*}
where $\displaystyle \|k\|=\sup_{(s,t)\in[-1,1]^2}|k(s,t)|$ represents the supremum norm on $[-1,1]^2$. Thus (G.i) holds by taking $\kappa:=4\|k\|^2+1.$\\

\noindent \textbf{Checking for (G.ii).} We have to show that $\displaystyle \sup_{g\in \mathcal{G}}\mathbb{E}g^2(U,V,h)\leq C_0 h$, where $C_0$ is a positive  constant. One can write
\begin{eqnarray*}\small
&\mathbb{E}g^2(U,V,h) =  \mathbb{E}\left[K\left(\frac{\phi^{-1}(u)-\phi^{-1}(\zeta_{1,n}(U))}{h},\frac{\phi^{-1}(v)-\phi^{-1}(\zeta_{2,n}(V))}{h}\right)- \mathbb{I}\{U\leq u,V\leq v\}\right]^2 &\\
&=  \mathbb{E}\left[K^2\left(\frac{\phi^{-1}(u)-\phi^{-1}(\zeta_{1,n}(U))}{h},\frac{\phi^{-1}(v)-\phi^{-1}(\zeta_{2,n}(V))}{h}\right)\right] &\\
&  - 2\mathbb{E}\left[K\left(\frac{\phi^{-1}(u)-\phi^{-1}(\zeta_{1,n}(U))}{h},\frac{\phi^{-1}(v)-\phi^{-1}(\zeta_{2,n}(V))}{h}\right)\mathbb{I}\{U\leq u,V\leq v\}\right] + C(u,v)&\\
& =:  A -2B + C(u,v)&.
\end{eqnarray*}
 Since the function $K(\cdot,\cdot)$ is a kernel of a distribution function, we may assume without loss of generality that it takes its values in $[0,1]$. Then, we can use the inequality $K^2(x,y)\leq K(x,y)$ to bound up the term $A$ in the right hand side of the previous egality.
\begin{eqnarray*}
&A  =   \mathbb{E}\left[K^2\left(\frac{\phi^{-1}(u)-\phi^{-1}(\zeta_{1,n}(U))}{h},\frac{\phi^{-1}(v)-\phi^{-1}(\zeta_{2,n}(V))}{h}\right)\right]&\\
&\leq  \mathbb{E}\left[K\left(\frac{\phi^{-1}(u)-\phi^{-1}(\zeta_{1,n}(U))}{h},\frac{\phi^{-1}(v)-\phi^{-1}(\zeta_{2,n}(V))}{h}\right)\right]&\\
& \leq \mathbb{E}\left[ \int_{-1}^{1}\int_{-1}^{1}\mathbb{I}\left\{U\leq \zeta_{1,n}^{-1}\circ\phi(\phi^{-1}(u)-sh),V\leq \zeta_{2,n}^{-1}\circ\phi(\phi^{-1}(v)-th)\right\}k(s,t)dsdt \right].&
\end{eqnarray*}
The other term $B$ is written into
\begin{eqnarray*}
&B  = \mathbb{E}\left[K\left(\frac{\phi^{-1}(u)-\phi^{-1}(\zeta_{1,n}(U))}{h},\frac{\phi^{-1}(v)-\phi^{-1}(\zeta_{2,n}(V))}{h}\right)\mathbb{I}\{U\leq u,V\leq v\}\right]&\\
&= \mathbb{E}\left[ \int_{-1}^{1}\int_{-1}^{1}\mathbb{I}\left\{s\leq \frac{\phi^{-1}(u)-\phi^{-1}(\zeta_{1,n}(U))}{h},t\leq \frac{\phi^{-1}(v)-\phi^{-1}(\zeta_{2,n}(V))}{h}\right\} \mathbb{I}\{U\leq u,V\leq v\}k(s,t)dsdt\right]&\\
&= \mathbb{E}\left[ \int_{-1}^{1}\int_{-1}^{1}\mathbb{I}\left\{U\leq u\wedge \zeta_1^{-1}\circ\phi(\phi^{-1}(u)-sh),V\leq v\wedge \zeta_2^{-1}\circ\phi(\phi^{-1}(v)-th)\right\}k(s,t)dsdt \right],&
\end{eqnarray*} where $x\wedge y=\min (x,y)$.
Note that $$ C(u,v)=\int_{-1}^{1}\int_{-1}^{1}C(u,v)k(s,t)dsdt,$$
as the kernel $k(\cdot,\cdot)$ satisfies  $\int_{-1}^{1}\int_{-1}^{1}k(s,t)dsdt=1$. Thus
\begin{eqnarray*}
&\mathbb{E}g^2(U,V,h)\leq \mathbb{E}\left[ \int_{-1}^{1}\int_{-1}^{1}\mathbb{I}\left\{U\leq \zeta_{1,n}^{-1}\circ\phi(\phi^{-1}(u)-sh),V\leq \zeta_{2,n}^{-1}\circ\phi(\phi^{-1}(v)-th)\right\}k(s,t)dsdt \right]&\\
 &  -2\mathbb{E}\bigg[ \int_{-1}^{1}\int_{-1}^{1}\mathbb{I}\left\{U\leq u\wedge \zeta_{1,n}^{-1}\circ\phi(\phi^{-1}(u)-sh),V\leq v\wedge \zeta_{2,n}^{-1}\circ\phi(\phi^{-1}(v)-th)\right\}k(s,t)dsdt \bigg]& \\
 & + \int_{-1}^{1}\int_{-1}^{1}C(u,v)k(s,t)dsdt.&
 %&  & +\int_{-1}^{1}\int_{-1}^{1}C(u,v)k(s)k(t)dsdt,
\end{eqnarray*}
We shall suppose that the empirical kernel distributions $\hat{F}_n$ and $\hat{G}_n$ are asymptotically equivalent the classical empirical distribution function $F_n$ $G_n$. From the Chung's (1949) LIL, we can infer that for all $u\in [0,1]$, as $n\rightarrow\infty$,
$$\zeta_{1,n}^{-1}(u)-u = F\circ\hat{F}_n^{-1}(u)-F\circ F^{-1}(u)=O(n^{-1}\log\log n).$$
That is $\zeta_{1,n}^{-1}(u)$ is asymptotically equivalent to $u$. Same for $\zeta_{2,n}^{-1}(u)=G\circ\hat{G}_n^{-1}(u)$. Thus, for all large $n$, we can write
\begin{eqnarray*}
&\mathbb{E}g^2(U,V,h)\leq \mathbb{E}\left[ \int_{-1}^{1}\int_{-1}^{1}\mathbb{I}\left\{U\leq \phi(\phi^{-1}(u)-sh),V\leq \phi(\phi^{-1}(v)-th)\right\}k(s,t)dsdt \right]&\\
 &  -2\mathbb{E}\left[ \int_{-1}^{1}\int_{-1}^{1}\mathbb{I}\left\{U\leq u\wedge \phi(\phi^{-1}(u)-sh),V\leq v\wedge \phi(\phi^{-1}(v)-th)\right\}k(s,t)dsdt \right]&\\
 &  +\int_{-1}^{1}\int_{-1}^{1}C(u,v)k(s,t)dsdt.&
\end{eqnarray*}
That is,
\begin{eqnarray}\label{eqcas}
\mathbb{E}g^2(U,V,h)&\leq &\int_{-1}^{1}\int_{-1}^{1}C\left(\phi(\phi^{-1}(u)-sh),\phi(\phi^{-1}(v)-th)\right)k(s,t)dsdt \nonumber\\
 & & -2\int_{-1}^{1}\int_{-1}^{1}C\left(u\wedge\phi(\phi^{-1}(u)-sh),v\wedge \phi(\phi^{-1}(v)-th)\right)k(s,t)dsdt \\
 &  & +\int_{-1}^{1}\int_{-1}^{1}C(u,v)k(s,t)dsdt.\nonumber
\end{eqnarray}
Now, we have to discuss condition (G.ii) in the four following cases:\\

\textbf{Case 1.} $u\wedge\phi(\phi^{-1}(u)-sh)=\phi(\phi^{-1}(u)-sh)\ \text{et}\ v\wedge \phi(\phi^{-1}(v)-th)=\phi(\phi^{-1}(v)-th)$.\\
In this case the second member of inequality \eqref{eqcas} is reduced and we have 
\begin{eqnarray*}
\mathbb{E}g^2(U,V,h)&\leq &-\int_{-1}^{1}\int_{-1}^{1}C\left(\phi(\phi^{-1}(u)-sh),\phi(\phi^{-1}(v)-th)\right)k(s,t)dsdt \\
  &  & +\int_{-1}^{1}\int_{-1}^{1}C(u,v)k(s,t)dsdt\\
	&\leq &\int_{-1}^{1}\int_{-1}^{1}\left[C(u,v)-C\left(\phi(\phi^{-1}(u)-sh),\phi(\phi^{-1}(v)-th)\right)\right]k(s,t)dsdt.
\end{eqnarray*}
By a Taylor expansion for the copula function $C$, we have
\begin{eqnarray*}
C(u,v)-C(\phi(\phi^{-1}(u)-sh),\phi(\phi^{-1}(v)-th))&=&[u-\phi(\phi^{-1}(u)-sh)]C_u(u,v)\\
 &+ & [v-\phi(\phi^{-1}(v)-th)]C_v(u,v)+ o(h). 
\end{eqnarray*}
Applying again a Taylor-Young expansion for the function $\phi$, we obtain
$$ u-\phi(\phi^{-1}(u)-sh)=\phi(\phi^{-1}(u)) -\phi(\phi^{-1}(u)-sh)=\phi'(u)sh+o(h)$$
and
$$ v-\phi(\phi^{-1}(v)-sh)=\phi(\phi^{-1}(v)) -\phi(\phi^{-1}(v)-th)=\phi'(u)th+o(h).$$
Thus
\begin{eqnarray*}
\mathbb{E}g^2(U,V,h)&\leq &\int_{-1}^{1}\int_{-1}^{1}\left[\phi'(u)C_u(u,v)h+\phi'(v)C_v(u,v)h\right]k(s,t)dsdt \\
	&\leq & 4h\left[\|C_u\|+\|C_v\|\right]\sup_{x\in[0,1]}|\phi'(x)|\|k\|.
\end{eqnarray*}
Taking $C_0=4\left[\left\|C_u\right\|+\left\|C_v\right\|\right]\left\|\phi'\right\|\left\|k\right\|$ gives condition (G.ii).\\

\textbf{Case 2.} $u\wedge\phi(\phi^{-1}(u)-sh)=u\ \text{et}\ v\wedge \phi(\phi^{-1}(v)-th)=v$. \\
Here the inequality \eqref{eqcas} is reduced to
\begin{eqnarray*}
\mathbb{E}g^2(U,V,h)\leq \int_{-1}^{1}\int_{-1}^{1}\left[C\left(\phi(\phi^{-1}(u)-sh),\phi(\phi^{-1}(v)-th)\right)-C(u,v)\right]k(s,t)dsdt.
\end{eqnarray*}
Using the same arguments as in Case 1, we obtain condition (G.ii):
$\displaystyle \sup_{g\in \mathcal{G}}\mathbb{E}g^2(U,V,h)\leq C_0h$, with $C_0=4\left[\left\|C_u\right\|+\left\|C_v\right\|\right]\left\|\phi'\right\|\left\|k\right\|$.\\

\textbf{Case 3.} $u\wedge\phi(\phi^{-1}(u)-sh)=\phi(\phi^{-1}(u)-sh)\ \text{et}\ v\wedge \phi(\phi^{-1}(v)-th)=v$.\\
Here, inequality \eqref{eqcas} is rewritten into
\begin{eqnarray*}
&\mathbb{E}g^2(U,V,h)\leq \int_{-1}^{1}\int_{-1}^{1}C\left(\phi(\phi^{-1}(u)-sh),\phi(\phi^{-1}(v)-th)\right)k(s,t)dsdt & \\
 & -2\int_{-1}^{1}\int_{-1}^{1}C\left(\phi(\phi^{-1}(u)-sh),v\right)k(s,t)dsdt  +\int_{-1}^{1}\int_{-1}^{1}C(u,v)k(s,t)dsdt &\\
&\leq \int_{-1}^{1}\int_{-1}^{1}\left[C\left(\phi(\phi^{-1}(u)-sh),\phi(\phi^{-1}(v)-th)\right)-C\left(\phi(\phi^{-1}(u)-sh),v\right)\right]k(s,t)dsdt &\\
& -\int_{-1}^{1}\int_{-1}^{1}\left[C\left(\phi(\phi^{-1}(u)-sh),v\right)-C(u,v)\right]k(s,t)dsdt. &
\end{eqnarray*}
By applying successively a Taylor expansion for $C$ and for $\phi$, we get
\begin{eqnarray*}
\mathbb{E}g^2(U,V,h)&\leq &\int_{-1}^{1}\int_{-1}^{1}C_v\left(\phi(\phi^{-1}(u)-sh),\theta_1\right)\left[\phi(\phi^{-1}(v)-th)-\phi(\phi^{-1}(v))\right]k(s,t)dsdt\\
& & -\int_{-1}^{1}\int_{-1}^{1}C_u\left(\theta_2,v\right)\left[\phi(\phi^{-1}(u)-sh)-\phi(\phi^{-1}(u))\right]k(s,t)dsdt\\
&\leq & \int_{-1}^{1}\int_{-1}^{1}C_v\left(\phi(\phi^{-1}(u)-sh),\theta_1\right)\phi'(\gamma_1).(-th)k(s,t)dsdt\\
& & -\int_{-1}^{1}\int_{-1}^{1}C_u\left(\theta_2,v\right)\phi'(\gamma_2).(-sh)k(s,t)dsdt,
\end{eqnarray*}
where $\theta_1\in \left(\phi(\phi^{-1}(v)-th),v\right)\ ;\ \theta_2\in \left(\phi(\phi^{-1}(u)-sh),u\right)\ ;\ \gamma_1\in\left(\phi^{-1}(v)-th,\phi^{-1}(v)\right)\ ;\ \gamma_2\in\left(\phi^{-1}(u)-sh,\phi^{-1}(u)\right).$\\
This implies 
\begin{eqnarray*}
\mathbb{E}g^2(U,V,h)&\leq &4h\left\|C_v\right\|\left\|\phi'\right\|\left\|k\right\|^2\left|t\right|+4h\left\|C_u\right\|\left\|\phi'\right\|\left\|k\right\|\left|s\right|\\
&\leq & 4h\left\|\phi'\right\|\left\|k\right\|\left(\left\|C_v\right\|+\left\|C_u\right\|\right).
\end{eqnarray*}
Thus condition (G.ii) holds, with $C_0=4\left\|\phi'\right\|\left\|k\right\|\left(\left\|C_v\right\|+\left\|C_u\right\|\right).$\\

\textbf{Case 4.} $u\wedge\phi(\phi^{-1}(u)-sh)=u\ \text{et}\ v\wedge \phi(\phi^{-1}(v)-th)=\phi(\phi^{-1}(v)-th)$.\\ 
This case is analogous to Case 3, where the roles of $u$ and $v$ are interchanged. Hence, condition (G.ii) is fulfilled, with the same constant  $C_0=4\left\|\phi'\right\|\left\|k\right\|\left(\left\|C_v\right\|+\left\|C_u\right\|\right).$\\

\textbf{Checking for (F.i)}. We have to check the uniform entropy condition for the class of functions
$$
\mathcal{G}=\left\{\begin{array}{c} 
K\left(\frac{\phi^{-1}(u)-\phi^{-1}(\zeta_{1,n}(s))}{h}, \frac{\phi^{-1}(v)-\phi^{-1}(\zeta_{2,n}(t))}{h}\right)- \mathbb{I}\{s\leq u,t\leq v\},\\
 u,v\in[0,1], 0<h<1 \,\text{and }\, \zeta_{1,n}\zeta_{2,n}:[0,1]\mapsto [0,1] \,\text{nondecreasing.}
\end{array} \right\}
$$
 To this end, we consider the following classes of functions, where $\varphi$ is an increasing function :\\
\noindent
$ \mathbb{F}=\left\{(\varphi(x)+m)/\lambda,\lambda > 0,\;m\in\mathbb{R}\right\}$\\
$\displaystyle \mathbb{K}_0=\left\{K((\varphi(x)+m)/\lambda),\lambda>0, \;m\in\mathbb{R}\right\}$\\
$\displaystyle \mathbb{K}=\left\{K((\phi(x)+m)/\lambda,(\phi(y)+m)/\lambda), \lambda>0,\;m\in\mathbb{R}\right\}$\\
$\displaystyle \mathbb{H}=\left\{K((\phi(x)+m)/\lambda,(\phi(y)+m)/\lambda)-\mathbb{I}\left\{x\leq u,y\leq v\right\}\ ;\lambda> 0, \;m\in\mathbb{R}, (u,v)\in [0,1]^2\right\}$.\\

It is clear that by applying  lemmas 2.6.15 and 2.6.18 in van der Vaart and Wellner (see \cite{r9}, p. 146-147), the sets $\mathbb{F},\;\mathbb{K}_0,\;\mathbb{K},\;\mathbb{H}$ are all VC-subgraph classes. Thus, by choosing  the constant function ${\rm G}(x,y)\mapsto {\rm G}(x,y)=\left\|k\right\|^2+1 $ as an envelope function for the class $\mathbb{H}$ ( indeed ${\rm G}(x,y)\geq \sup_{g\in\mathbb{H}}\left|g(x,y)\right|,\ \forall (x,y))$, we can infer from Theorem 2.6.7 in \cite{r9} that  $\mathbb{H}$ satisfies the uniform entropy condition. Since $\mathbb{H}$ and $\mathcal{G}$ have the same structure, we can conclude that $\mathcal{G}$ satisfies this property too, i.e.
$$\exists \; C_0>0, \nu_0>0\ :\ N\left(\epsilon,\mathcal{G}\right)\leq C_0\epsilon^{-\nu_0},\quad 0<\epsilon<1.$$

\noindent \textbf{Checking for (F.ii).}\\
 Define the class of functions 
$$
\mathcal{G}_0=\left\{\begin{array}{c} 
K\left(\frac{\phi^{-1}(u)-\phi^{-1}(\zeta_{1}(s))}{h}, \frac{\phi^{-1}(v)-\phi^{-1}(\zeta_{2}(t))}{h}\right)- \mathbb{I}\{s\leq u,t\leq v\},\\
 u,v\in[0,1]\cap\mathbb{Q}, 0<h<1 \,\text{and }\, \zeta_{1}\zeta_{2}:[0,1]\mapsto [0,1] \,\text{nondecreasing.}
\end{array} \right\}
$$
It's clear that  $\mathcal{G}_0$ is countable and  $\mathcal{G}_0\subset\mathcal{G}$. Let
$$g(s,t)=K\left(\frac{\phi^{-1}(u)-\phi^{-1}(\zeta_1(s))}{h},\frac{\phi^{-1}(v)-\phi^{-1}(\zeta_2(t))}{h}\right)- \mathbb{I}\{s\leq u,s\leq v\}\in\mathcal{G}, (s,t)\in [0,1]^2$$
and for $m\geq 1$, 
$$g_m(s,t)=K\left(\frac{\phi^{-1}(u_m)-\phi^{-1}(\zeta_1(s))}{h} ,\frac{\phi^{-1}(v_m)-\phi^{-1}(\zeta_2(y))}{h}\right)- \mathbb{I}\{s\leq u_m,t\leq v_m\},$$
where $ u_m=\frac{1}{m^2}[m^2u]+\frac{1}{m^2}$ and $ v_m=\frac{1}{m^2}[m^2v]+\frac{1}{m^2}$.\\
\indent Let $\alpha_m=u_m-u,\quad \beta_m=v_m-v$. Then, we have  $\displaystyle 0<\alpha_m\leq\frac{1}{m^2}$ and $\displaystyle 0<\beta_m\leq\frac{1}{m^2}$. Hence $u_m\searrow u$ and $v_m\searrow v$. By continuity $\phi^{-1}(u_m)\searrow \phi^{-1}(u)$ and $\phi^{-1}(v_m)\searrow \phi^{-1}(v)$. Define
$$\delta_{m,u}=\left(\frac{\phi^{-1}(u_m)-\phi^{-1}(\zeta_1(x))}{h} \right)-\left(\frac{\phi^{-1}(u)-\phi^{-1}(\zeta_1(x))}{h} \right)=\frac{\phi^{-1}(u_m)-\phi^{-1}(u)}{h} $$ and
$$\delta_{m,v}=\left(\frac{\phi^{-1}(v_m)-\phi^{-1}(\zeta_2(y))}{h} \right)-\left(\frac{\phi^{-1}(v)-\phi^{-1}(\zeta_2(y))}{h} \right)=\frac{\phi^{-1}(v_m)-\phi^{-1}(v)}{h}.$$
Then
$\delta_{m,u}\searrow 0$ and $\delta_{m,v}\searrow 0$, 
which are equivalent to $$ \left(\frac{\phi^{-1}(u_m)-\phi^{-1}(\zeta_1(x))}{h} \right)\searrow \left(\frac{\phi^{-1}(u)-\phi^{-1}(\zeta_1(x))}{h} \right)$$
 and $$\left(\frac{\phi^{-1}(v_m)-\phi^{-1}(\zeta_2(y))}{h} \right)\searrow \left(\frac{\phi^{-1}(v)-\phi^{-1}(\zeta_2(y))}{h} \right).$$
By right-continuity of the kernel $K(\cdot,\cdot)$, we obtain
$$\forall (x,y)\in[0,1]^2, g_m(x,y)\longrightarrow g(x,y), m\rightarrow \infty$$
and conclude that $\mathcal{G}$ is pointwise measurable class.\\
\end{proof}

\addcontentsline{toc}{section}{References}


\begin{thebibliography}{99}

\bibitem{r1} B\^a, D., Seck, C.T. and L\^o, G.S. (2015). Asymptotic Confidence Bands for Copulas Based on the Local
Linear Kernel Estimator. \textit{Applied Mathematics}, 6, 2077-2095. http://dx.doi.org/10.4236/am.2015.612183

\bibitem{r2} Chen, S. X. and Huang, T.-M. (2007). Nonparametric estimation of copula functions for dependence modeling. \textit{Canadian Journal of Statistics}, 35:265-282.

\bibitem {chg} Chung, K-L (1949). An estimate concerning the Kolmogoroff limit distribution. Trans AmMath Soc 67:36–50.
%\bibitem{r2} Deheuvels, P. and Mason, D. M. (2004). General Asymptotic Confidence Bands Based on Kernel-type Function Estimators. \textit{Statistical Inference stochastic process.}, 7:225-277
\bibitem{deh} Deheuvels, P. (1979). La fonction de d\'ependence empirique et ses propri\'et\'es. Un test non param\'etrique
d'ind\'ependance. \textit{Bulletin Royal Belge de l'Acad\'emie des Sciences}, (5), 65, 274-292.

\bibitem{r8}  Fermanian, J.D. and Scaillet, O. (2003) Nonparametric Estimation of Copulas for Time Series. \textit{Journal of risk}, 5, 25-54.

\bibitem{r3} Fermanian, J., Radulovic, D. and Wegkamp, M. (2004). Weak convergence of empirical copula processes. \textit{International Statistical Institute (ISI) and Bernoulli Society for Mathematical Statistics and Probability.}, vol. 10, 5:847-860.
\bibitem{ggr} Genest, C. , Ghoudi, K. and Rivest, L.P. (1995). A semiparametric estimation procedure of dependence parameters in multivariate families of distributions. \textit{Biometrika\textit{•}}, 82, 543-552.
\bibitem{r4} Gijbels, J. i Mielniczuk, J. (1990). Estimation of the density of a copula function. \textit{Communications in Statistics, Ser. A } vol.19, 445-464.


\bibitem{r5} Mason, D. M. and Swanepoel, J.H.W (2010). A general result on the uniform in bandwidth consistency of kernel-type function estimators. \textit{Sociedadde Estadistica e Investigation Operativa 2010.}, DOI 10.1007/S11749-010-0188-0

\bibitem{r6} Nelsen, R., B. (1999). An Introduction to Copulas. \textit{Springer Series in Statistics.}, ISBN-10:0-387-28659-4

\bibitem{r7} Omelka, M. and Gijbels, I. and Veraverbeke, N. (2009). Improved kernel estimators of copulas : weak convergence and goodness-of-fit testing. \textit{The Annals of Statistics.}, vol. 37, 5B:3023-3058

%\textit{FAME Research Paper No. 57. Available at SSRN: http://ssrn.com/abstract=372142 or http://dx.doi.org/10.2139/ssrn.372142 
\bibitem{oak} Oakes, D. (1982). A model for association in bivariate survival data. \textit{Journal of the Royal Statistical Society Series B}, 44, 414-422.

\bibitem{skl}Sklar, A. (1959). Fonctions de répartition à $n$ dimensions et leurs marges. \textit{Publ. Inst. Statistic. Univ. Paris}, 8, 229-231.
\bibitem{r9} van der Vaart, A. W. and Wellner, J. A.: Weak Convergence and Empirical Processes, \textit{Springer, New
York}, 1996.

\bibitem{r10} Wichura, M. J. (1973). Some Strassen-type laws of the iterated logarithm for multiparameter stochastic processes with independent increments. \textit{The Annals of Probability.}, 1, 272-296.

\bibitem{r11} Zari, T. (2010). Contribution \`a l'\'etude du processus empirique de copule. \textit{Th\`ese de doctorat, Universit\'e Paris, 6.}, France

\end{thebibliography}
\end{document}